\documentclass[journal,10pt]{IEEEtran}
\usepackage{amsmath,amssymb,amsfonts}
\usepackage{graphicx}
\usepackage{booktabs}
\usepackage{cite}
\usepackage{url}
\usepackage{xcolor}
\usepackage{mathtools}
\usepackage{bm}
\usepackage{microtype}
\usepackage{balance}

\newcommand{\E}{\mathbb{E}}
\newcommand{\Var}{\operatorname{Var}}
\newcommand{\CN}{\mathcal{CN}}
\newcommand{\GammaregL}{P}
\newcommand{\GammaregU}{Q}
\newcommand{\bep}{\overline P_b}
\newtheorem{theorem}{Theorem}
\newtheorem{corollary}{Corollary}

\begin{document}
% \title{Fluid Antenna Multiple Access for Two-Level Noise Modulation:\\
% Exact Port Selection, Common-State Coupling, and Load Scaling}

\title{Fluid Antenna Multiple Access for Noise Modulation}

\author{Hadi Zayyani, Felipe A. P. de Figueiredo
\thanks{Authors are with INATEL, Brazil.}}

\maketitle

\begin{abstract}
Noise modulation (NoiseMod) encodes information in the variance of a noise-like waveform. Its noncoherent structure is attractive for low-complexity links, but co-channel NoiseMod users are challenging because their waveforms enter the same variance statistic used for detection. We study fluid antenna multiple access (FAMA) for \emph{two-level} NoiseMod when every interferer transmits an independent random bit and therefore contributes one of two nonzero interference variances. Exact finite-sample maximum-likelihood energy detection is adopted from prior NoiseMod/TNM work and is not claimed as a new detector. We derive the exact single-port moment-generating function and moments of the random-bit aggregate interference, including heterogeneous user powers, and prove that the port minimizing the exact conditional bit-error probability (BEP) is $\arg\max_k H_k/(\sigma_w^2+J_k)$, where $H_k$ is the desired-link power and $J_k$ is the instantaneous bit-state-aware aggregate interference variance. For independent fading ports, the common interferer bit vector couples the complete per-port decision metrics even when the fading itself is independent; conditioning on the number of high-state interferers yields an exact binomial-mixture order-statistic representation and avoids the generally invalid unconditional rule $F_{Z_{\max}}=F_Z^{N_p}$. For spatially correlated operation, an integer-$\mu$ $\kappa$-$\mu$ benchmark is constructed by applying full pairwise Jakes correlation to the scattered cluster fields, and channel-averaged BEP is evaluated by conditional Monte Carlo. A standard-error-aware load study shows a large increase in admissible co-channel users as useful spatial degrees of freedom grow. Finally, a finite-sample variance-domain sensing diagnostic quantifies the oracle gap under the deliberately favorable assumption that channels and interferer states remain frozen throughout sensing and the following data bit. Its large sample overhead is therefore interpreted as a lower-bound warning on fast-FAMA sensing difficulty, not as evidence of an immediately deployable acquisition protocol.
\end{abstract}

\begin{IEEEkeywords}
Fluid antenna multiple access (FAMA), fluid antenna system (FAS), noise modulation, co-channel interference, $\kappa$-$\mu$ fading, port selection, bit error probability.
\end{IEEEkeywords}

\section{Introduction}
Noise modulation (NoiseMod) conveys information through the variance of a random waveform rather than its instantaneous phase or amplitude. Basar introduced the general NoiseMod framework and analyzed variance-based detection using a Gaussian/central-limit approximation for the sample-energy statistic \cite{basar_noisemod}. Exact finite-sample maximum-likelihood (ML) energy detection was subsequently derived for thermal-noise modulation \cite{alshawaqfeh_tnm}, and recent work has emphasized that NoiseMod's apparent per-sample-SNR advantages must be interpreted together with sample count, bit energy, rate, threshold calibration, and fading sensitivity \cite{figueiredo_practical}. Thus, exact single-link detection is already prior art; the question here is how that detector behaves in a genuinely multi-user, spatially reconfigurable receiver.

FAMA exploits spatial variation over a compact fluid-antenna aperture to find a port with a favorable desired-to-interference condition, enabling open-loop multiple access without transmit precoding \cite{wong_fama}. Noise-aware slow-FAMA analysis and later partial-observation work already use SINR as a performance or port-selection quantity \cite{yang_slow_noisy,eskandari_cgan}; accordingly, we do not claim that a generic SINR-form ranking is itself new. The new point proved here is that the exact finite-sample NoiseMod BEP induces the particular instantaneous statistic $H_k/(\sigma_w^2+J_k)$ when $J_k$ is the current random-bit variance field. Spatial modeling matters: the simple reference-port model used in early FAS/FAMA analysis can misrepresent the full Jakes correlation among closely spaced ports, motivating full pairwise or otherwise more accurate correlation models \cite{khammassi}. Recent studies have independently combined FAS with NoiseMod \cite{zayyani_fas_noisemod} and with on--off digital noise (OODN) modulation over $\kappa$-$\mu$ fading \cite{araujo_oodn_fas}. Co-channel OODN interference has also been analyzed through a finite mixture over interferer activity states, including an admission-control interpretation \cite{araujo_oodn_interference}. Multi-user communication in the broader noise domain is likewise not new: noise-domain NOMA has multiplexed users by assigning information to distinct statistical dimensions such as mean and variance \cite{yapici_ndnoma}. That construction is different from the present same-dimension co-channel problem, in which all users modulate variance and become mutual interference. These works sharply narrow what can legitimately be claimed as new: neither exact energy detection, FAS diversity, $\kappa$-$\mu$ fading, nor multi-user noise-domain signaling is novel in isolation.

The unresolved interaction addressed here is different. In two-level NoiseMod both symbol states have nonzero variance, so every co-channel user is always present but its contribution changes with its bit. The same interferer bit vector is seen at every fluid-antenna port, while the corresponding channel powers vary spatially. Consequently, the interference field is bit-state dependent, the optimal port metric must be tied to the exact NoiseMod decision problem, and even an ``independent-port'' fading benchmark retains cross-port dependence through the common transmitted bits.

A second issue is practicality. Contemporary fast-FAMA work explicitly identifies the common genie-aided premise, probing all ports every symbol and knowing the desired/interference split, as unrealistic because of switching, pilot, and reconstruction overhead \cite{waqar_fast_fama}. Related work argues that selection-oriented channel reconstruction, rather than global NMSE alone, is the relevant sensing problem and highlights the tradeoff between multi-port sensing overhead and port-selection gain \cite{elganimi_channel_est}. We therefore use full per-port knowledge only as an \emph{oracle benchmark}. A deliberately favorable finite-sample variance-domain sensing experiment is included only to expose the oracle-to-sensing gap; because it freezes interferer states across the sensing interval and following data bit, it must not be interpreted as a complete symbol-by-symbol acquisition solution.

The contributions are as follows.
\begin{itemize}
\item For two-level equiprobable NoiseMod interferers, we derive the exact single-port MGF, mean, and variance of the aggregate interference $J=\sum_iP_{i,B_i}G_i$, including unequal average user powers. This makes explicit why random unequal state scalings do not, in general, collapse to one common-scale $\kappa$-$\mu$ variate.
\item We prove that, for the exact finite-$N_s$ ML detector, the conditional BEP is strictly decreasing in the variance ratio $V_1/V_0$. Hence, the exact BEP-minimizing port is $k^\star=\arg\max_k H_k/(\sigma_w^2+J_k)$, not merely an interference-limited SIR heuristic.
\item For iid fading across ports, we derive an exact order-statistic reduction conditioned on the common interferer bit state. For identical interferers it reduces to a binomial mixture over the number of high-state interferers. This identifies and quantifies the error in the naive unconditional $F_Z^{N_p}$ construction.
\item We introduce a multiple-access load metric, $N_I^{\max}$ at a target BEP, and study its dependence on aperture and port count under a full-pairwise Jakes-correlated integer-$\mu$ $\kappa$-$\mu$ benchmark, together with SIR-FAMA, max-$H$ FAS, min-interference, and fixed-port baselines and a near--far stress test.
\item We construct a finite-sample partial-probing and selected-port calibration diagnostic under a deliberately favorable frozen-state assumption. Its overhead/performance tradeoff quantifies an oracle-to-sensing gap and shows that direct probing is sample-inefficient even before fast interferer-state variation is introduced.
\end{itemize}

\section{Signal and Spatial Channel Model}
\label{sec:model}
\subsection{Two-Level NoiseMod With Co-Channel Users}
Consider one desired transmitter and $N_I$ co-channel NoiseMod interferers. User $u$ transmits $N_s$ independent complex Gaussian samples per bit,
\begin{equation}
 X_{u,n}\mid B_u=b\sim\CN(0,P_{u,b}),\qquad b\in\{0,1\},
\end{equation}
with equiprobable bits and a common variance ratio $\alpha>1$. We parameterize the two nonzero levels by the user's average transmitted variance $\bar P_u$,
\begin{equation}
 P_{u,0}=\frac{2\bar P_u}{1+\alpha},\qquad P_{u,1}=\alpha P_{u,0},
\label{eq:levels}
\end{equation}
so $\E[P_{u,B_u}]=\bar P_u$.

At fluid-antenna port $k$ the received sample for desired bit $b$ is
\begin{equation}
 Y_{k,n}=h_kX_{D,n}+\sum_{i=1}^{N_I}g_{i,k}X_{i,n}+W_{k,n},
\label{eq:rx}
\end{equation}
where $W_{k,n}\sim\CN(0,\sigma_w^2)$, $H_k=|h_k|^2$, and $G_{i,k}=|g_{i,k}|^2$. Desired and interfering users have independent channel processes. The $N_I$ interferer bits are independent and equiprobable, and are held constant over the $N_s$ samples of one desired bit. This is a synchronous block assumption; asynchronous state changes inside the observation window are outside the present model.

Conditioned on channels and interferer bits, the sum in \eqref{eq:rx} is exactly Gaussian because every transmitted waveform is Gaussian. Define
\begin{align}
 J_k&=\sum_{i=1}^{N_I}P_{i,B_i}G_{i,k},\label{eq:J}\\
 C_k&=\sigma_w^2+J_k,\label{eq:C}\\
 V_{b,k}&=C_k+H_kP_{D,b}.\label{eq:Vb}
\end{align}
Then $Y_{k,n}|b,\bm B,\bm H,\bm G\sim\CN(0,V_{b,k})$. Importantly, the same transmitted bit vector $\bm B=(B_1,\ldots,B_{N_I})$ enters \eqref{eq:J} at every port. This common state will matter in Section~\ref{sec:iid} even when fading across ports is independent.

\subsection{Jakes-Correlated Integer-$\mu$ $\kappa$-$\mu$ Benchmark}
For spatially correlated results we use an explicit joint benchmark whose single-port marginal is $\kappa$-$\mu$ with integer $\mu$. An $N_p$-port linear aperture spans $W\lambda$, with full pairwise scattered-field correlation
\begin{equation}
 [\bm R]_{k\ell}=J_0\!\left(\frac{2\pi(k-\ell)W}{N_p-1}\right),\qquad N_p>1,
\label{eq:jakes}
\end{equation}
and $\bm R=[1]$ for $N_p=1$. For a link with mean power $\Omega$, define
\begin{equation}
 v=\frac{\Omega}{\mu(1+\kappa)},\qquad d=\sqrt{v\kappa}.
\end{equation}
For cluster $c$, draw $\bm Z_c\sim\CN(\bm0,v\bm R)$ independently across $c$, and set
\begin{equation}
 H_k=\sum_{c=1}^{\mu}|d+Z_{c,k}|^2.
\label{eq:kmu_joint}
\end{equation}
The same construction is used independently for each $G_{i,k}$. Marginally, \eqref{eq:kmu_joint} is the standard noncentral-chi-square construction of $\kappa$-$\mu$ power with $\E[H_k]=\Omega$. We intentionally restrict $\mu$ to positive integers and use a common dominant-component phase across the compact aperture. The construction is therefore a transparent benchmark joint model, not a claim of a universal correlated $\kappa$-$\mu$ process. Its use of full pairwise Jakes correlation targets the correlation-model concern identified in \cite{khammassi}.

\begin{figure*}[t]
\centering
\includegraphics[width=0.94\textwidth]{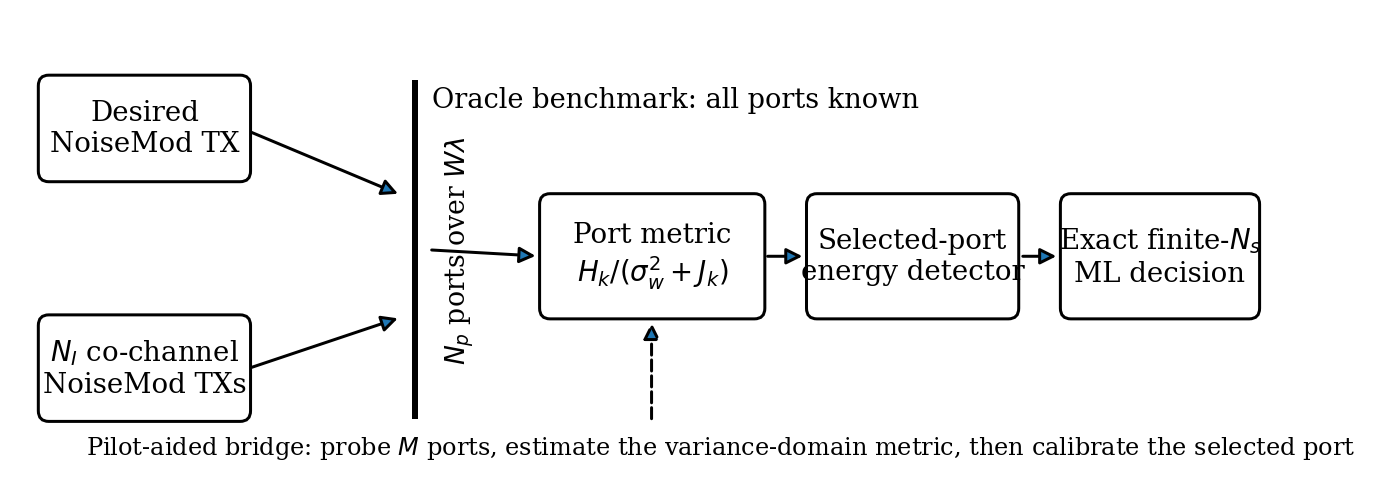}
\caption{System and evaluation architecture. The noise-aware metric is the exact finite-$N_s$ BEP-optimal oracle rule proved in Section~\ref{sec:selection}. The pilot-aided path is a finite-sample sensing diagnostic under a frozen-state assumption; it is not assumed to solve symbol-by-symbol fast-FAMA reconstruction.}
\label{fig:system}
\end{figure*}

\section{Exact Finite-Sample Detection}
\label{sec:detector}
Exact finite-sample likelihood-based energy detection is established prior art for noise-level/thermal-noise modulation \cite{alshawaqfeh_tnm,figueiredo_practical}; we summarize the complex-baseband Gamma specialization implied by our $\CN(0,V_b)$ model because it is the foundation of the port-selection theorem. The selected-port statistic is
\begin{equation}
 T=\frac{1}{N_s}\sum_{n=1}^{N_s}|Y_n|^2.
\end{equation}
Conditioned on bit $b$ and state $(H,J)$,
\begin{equation}
 \frac{N_sT}{V_b}\sim\operatorname{Gamma}(N_s,1).
\label{eq:gamma_stat}
\end{equation}
For equal priors and $0<V_0<V_1$, equating the two Gamma likelihoods gives the exact threshold
\begin{equation}
 \eta_{\rm ML}=\frac{V_0V_1}{V_1-V_0}\ln\!\frac{V_1}{V_0}.
\label{eq:eta}
\end{equation}
Let $\GammaregL(a,x)$ and $\GammaregU(a,x)$ denote the regularized lower and upper incomplete Gamma functions. The exact conditional BEP is
\begin{equation}
 p_e(V_0,V_1)=\frac12\left[
 \GammaregU\!\left(N_s,\frac{N_s\eta_{\rm ML}}{V_0}\right)
 +\GammaregL\!\left(N_s,\frac{N_s\eta_{\rm ML}}{V_1}\right)
 \right].
\label{eq:exact_bep}
\end{equation}
Unlike a CLT approximation, \eqref{eq:exact_bep} is exact under the iid complex-Gaussian sample model in Section~\ref{sec:model}. It is scale-invariant: $p_e(cV_0,cV_1)=p_e(V_0,V_1)$ for $c>0$.

\section{Bit-State-Aware Interference Statistics}
\label{sec:interference}
At a single port let $G$ denote one normalized $\kappa$-$\mu$ power from \eqref{eq:kmu_joint}. The MGF under the integer-cluster construction is
\begin{equation}
 M_G(s)=(1-vs)^{-\mu}\exp\!\left(\frac{\mu\kappa v s}{1-vs}\right),\qquad s<1/v.
\label{eq:mgfG}
\end{equation}
For independent users and equiprobable bits, the exact marginal MGF of \eqref{eq:J} is therefore
\begin{equation}
 M_J(s)=\prod_{i=1}^{N_I}\frac{M_G(P_{i,0}s)+M_G(P_{i,1}s)}{2},
 \quad s<\frac{1}{v\max_i P_{i,1}}.
\label{eq:mgfJ}
\end{equation}
Equation~\eqref{eq:mgfJ} remains valid for heterogeneous received average powers $\bar P_i$; the stated positive-$s$ domain is the intersection of the component-MGF domains (and is automatically satisfied for Laplace-transform checks with $s<0$). It also makes clear why a single common-scale $\kappa$-$\mu$ aggregation is generally invalid: the random state multiplies $G_i$ by either $P_{i,0}$ or $P_{i,1}$ before summation.

For later validation, the $\kappa$-$\mu$ power variance is
\begin{equation}
 \sigma_G^2=\Omega^2\frac{1+2\kappa}{\mu(1+\kappa)^2}.
\end{equation}
With $q_{2,i}=\tfrac12(P_{i,0}^2+P_{i,1}^2)$ and $\E[P_{i,B_i}]=\bar P_i$,
\begin{align}
 \E[J]&=\Omega\sum_i\bar P_i,\label{eq:Jmean}\\
 \Var(J)&=\sum_i\left[q_{2,i}(\Omega^2+\sigma_G^2)-\bar P_i^2\Omega^2\right].
\label{eq:Jvar}
\end{align}
The closure of noncentral chi-square variables still applies \emph{conditioned on common scales}; what fails is replacing the unconditional random mixture of differently scaled terms by one same-family random variable.

\section{Exact BEP-Optimal Port Selection}
\label{sec:selection}
Define $\rho=V_1/V_0>1$. By scale invariance, \eqref{eq:exact_bep} can be written as $p_e(\rho)$. The next result links the exact detector directly to a FAMA metric.

\begin{theorem}[Exact NoiseMod-FAMA selection rule]
\label{thm:selection}
For equiprobable two-level NoiseMod with $P_{D,1}>P_{D,0}>0$ and the finite-$N_s$ ML detector \eqref{eq:eta}, the conditional BEP is strictly decreasing in $\rho=V_1/V_0$. Consequently, among simultaneously available ports, the port minimizing exact conditional BEP is
\begin{equation}
 k^\star=\arg\max_k Z_k,\qquad Z_k\triangleq\frac{H_k}{\sigma_w^2+J_k}.
\label{eq:optimal_port}
\end{equation}
\end{theorem}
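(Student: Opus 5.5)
By the scale invariance of \eqref{eq:exact_bep}, I normalize $V_0=1$ and $V_1=\rho>1$. The threshold \eqref{eq:eta} then becomes $\eta(\rho)=\rho\ln\rho/(\rho-1)$. With $x_0(\rho)=N_s\eta(\rho)$ and $x_1(\rho)=N_s\eta(\rho)/\rho=N_s\ln\rho/(\rho-1)$, the error probability is
\begin{equation*}
 p_e(\rho)=\tfrac12\left[\GammaregU(N_s,x_0(\rho))+\GammaregL(N_s,x_1(\rho))\right].
\end{equation*}

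The cleanest route to strict monotonicity is not to differentiate this closed form directly. Instead I will use the ML structure. Write $p_e(\rho)=\tfrac12\min_\eta[\Pr(T>\eta\mid V_0=1)+\Pr(T<\eta\mid V_1=\rho)]$; the minimizer is exactly $\eta_{\rm ML}$, because it is the likelihood-ratio threshold for a monotone likelihood ratio family. By the envelope theorem, $dp_e/d\rho$ equals the partial derivative with respect to $\rho$ at fixed $\eta=\eta_{\rm ML}$, and only the second term depends on $\rho$. Since $N_sT/\rho\sim\operatorname{Gamma}(N_s,1)$ under $V_1=\rho$, we have $\Pr(T<\eta\mid\rho)=\GammaregL(N_s,N_s\eta/\rho)$. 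Its $\rho$-derivative is $-f_{N_s}(N_s\eta/\rho)\,N_s\eta/\rho^2<0$, where $f_{N_s}$ is the Gamma$(N_s,1)$ density, which is strictly positive on $(0,\infty)$.

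Alternatively, I can avoid the envelope argument and differentiate explicitly. The density terms from $x_0'$ and $x_1'$ cancel exactly, because at the ML threshold the two Gamma likelihoods are equal: $x_0^{N_s-1}e^{-x_0}=\rho^{N_s}x_1^{N_s-1}e^{-x_1}\cdot\rho^{-N_s}\cdot\ldots$ after the change of variables. What remains is the strictly negative term. I expect this bookkeeping of the cancellation to be the main technical step. I will present it via the envelope argument and verify the likelihood-equality identity $e^{-\eta}=\rho^{-N_s}e^{-\eta/\rho}$, which is precisely \eqref{eq:eta}. Strictness holds because $\eta_{\rm ML}>0$ is finite for every $\rho>1$.

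For the selection rule, I use \eqref{eq:Vb} at port $k$, which gives
\begin{equation*}
\rho_k=\frac{C_k+H_kP_{D,1}}{C_k+H_kP_{D,0}}.
\end{equation*}
Dividing numerator and denominator by $C_k$, I get $\rho_k=(1+P_{D,1}Z_k)/(1+P_{D,0}Z_k)$. Its derivative in $Z_k$ is $(P_{D,1}-P_{D,0})/(1+P_{D,0}Z_k)^2>0$, so $\rho_k$ is strictly increasing in $Z_k\ge0$. Composing this with the strictly decreasing map $p_e(\rho)$, the exact conditional BEP at port $k$ is a strictly decreasing function of $Z_k$ alone, the same function for all ports. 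Hence minimizing it over the available ports is equivalent to $\arg\max_k Z_k$, with ties in $Z_k$ corresponding exactly to ties in BEP. This step is routine once the monotonicity in $\rho$ is established.
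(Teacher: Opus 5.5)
Your proof is correct, and your selection step (dividing by $C_k$ to get $\rho_k=(1+P_{D,1}Z_k)/(1+P_{D,0}Z_k)$ and composing monotone maps) is exactly the paper's. For the monotonicity in $\rho$, however, you take a different route.

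The paper differentiates the closed form $\tfrac12[\GammaregU(N_s,z)+\GammaregL(N_s,a)]$ directly, with $a=N_s\ln\rho/(\rho-1)$ and $z=\rho a$. It uses the identity $f_N(a)=\rho f_N(z)$ to cancel the $a'(\rho)$ contributions, leaving the explicit slope $-zf_N(z)/(2\rho)$.

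You instead write $p_e(\rho)=\tfrac12\min_\eta g(\eta,\rho)$ with $g(\eta,\rho)=\Pr(T>\eta\mid V_0=1)+\Pr(T<\eta\mid V_1=\rho)$, and let the envelope theorem remove the threshold's dependence on $\rho$. At bottom the two are the same computation. The first-order condition $\partial_\eta g=0$ at $\eta_{\rm ML}$ is precisely $f_N(a)=\rho f_N(z)$, and your slope $-x_1f_N(x_1)/(2\rho)$ becomes the paper's after substituting it.

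What your framing buys:
\begin{itemize}
\item It explains why the cancellation must occur, and it needs only the sign of one partial derivative.
\item It gives strict monotonicity with no calculus at all. For $\rho'>\rho$,
$p_e(\rho')\le\tfrac12 g(\eta_{\rm ML}(\rho),\rho')<\tfrac12 g(\eta_{\rm ML}(\rho),\rho)=p_e(\rho)$,
because the miss term $\GammaregL(N_s,N_s\eta/\rho)$ strictly decreases in $\rho$ at any fixed $\eta>0$.
\item It carries over to other families in which the ML rule is a threshold on $T$ and the miss probability decreases in $\rho$ at fixed threshold.
\end{itemize}
The paper's route, in turn, is self-contained and yields a closed-form derivative.

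One small slip: with $\eta$ the threshold on the sample mean $T$, the likelihood-equality identity is $e^{-N_s\eta}=\rho^{-N_s}e^{-N_s\eta/\rho}$. As you wrote it, $e^{-\eta}=\rho^{-N_s}e^{-\eta/\rho}$ characterizes the threshold $x_0=N_s\eta$ on $N_sT$ rather than \eqref{eq:eta}. This does not affect your envelope argument.
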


\begin{IEEEproof}
Let $f_N(x)=x^{N_s-1}e^{-x}/\Gamma(N_s)$ be the unit-scale Gamma density and define
\begin{equation}
 a(\rho)=\frac{N_s\ln\rho}{\rho-1},\qquad z(\rho)=\rho a(\rho).
\end{equation}
From \eqref{eq:eta}--\eqref{eq:exact_bep},
\begin{equation}
 p_e(\rho)=\tfrac12\left[\GammaregU(N_s,z)+\GammaregL(N_s,a)\right].
\end{equation}
Because $f_N(a)=\rho f_N(z)$, direct differentiation yields
\begin{equation}
 \frac{dp_e}{d\rho}=-\frac{z f_N(z)}{2\rho}<0.
\label{eq:derivative}
\end{equation}
At port $k$, with $C_k=\sigma_w^2+J_k$ and $Z_k=H_k/C_k$,
\begin{equation}
 \rho_k=\frac{1+P_{D,1}Z_k}{1+P_{D,0}Z_k},
\end{equation}
whose derivative with respect to $Z_k$ is $(P_{D,1}-P_{D,0})/(1+P_{D,0}Z_k)^2>0$. Thus maximizing $Z_k$ maximizes $\rho_k$ and, by \eqref{eq:derivative}, minimizes the exact BEP.
\end{IEEEproof}

\begin{corollary}[Interference-limited limit]
If $J_k\gg\sigma_w^2$ for all candidate ports, \eqref{eq:optimal_port} approaches classical SIR selection $\arg\max_k H_k/J_k$. When interference is weak, discarding $\sigma_w^2$ is not generally optimal.
\end{corollary}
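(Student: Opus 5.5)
The statement to prove is the Corollary (interference-limited limit), which follows from Theorem~\ref{thm:selection}. The plan is to show that when $J_k\gg\sigma_w^2$ the metric $Z_k$ is a small perturbation of the SIR $S_k\triangleq H_k/J_k$. After that, give an explicit counterexample showing that dropping $\sigma_w^2$ can change the selected port when interference is weak.

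For the first part, write
\begin{equation*}
 Z_k=\frac{H_k}{J_k}\cdot\frac{1}{1+\sigma_w^2/J_k}=S_k\,(1+\epsilon_k)^{-1},\qquad \epsilon_k=\sigma_w^2/J_k .
\end{equation*}
Let $\epsilon=\max_k\epsilon_k\to0$. Then
\begin{equation*}
 S_k(1-\epsilon)\le Z_k\le S_k\quad\text{for every }k .
\end{equation*}
Hence the two rankings agree whenever the SIR gap is resolvable, i.e.\ whenever $S_{(1)}(1-\epsilon)>S_{(2)}$. Here $S_{(1)}$ and $S_{(2)}$ are the largest and second-largest SIRs.

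Also, $\rho_k$ is continuous in $Z_k$, and $p_e$ is continuous in $\rho$ by \eqref{eq:derivative}. So even when the SIR gap is not resolvable, the BEP loss of SIR selection relative to $k^\star$ vanishes as $\epsilon\to0$. This is the sense in which \eqref{eq:optimal_port} ``approaches'' SIR selection.

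For the second part, a two-port example suffices. Choose
\begin{equation*}
 H_1=1,\ J_1=10^{-3}\sigma_w^2,\qquad H_2=10,\ J_2=10^{-1}\sigma_w^2 .
\end{equation*}
The SIR rule prefers port~1, since $S_1=10^3/\sigma_w^2$ exceeds $S_2=10^2/\sigma_w^2$. However, $Z_1\approx1/\sigma_w^2$ is smaller than $Z_2\approx 9.1/\sigma_w^2$. By Theorem~\ref{thm:selection}, port~2 therefore has strictly smaller exact BEP, so discarding $\sigma_w^2$ is suboptimal.

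The only delicate point is making ``approaches'' precise, because near-ties between ports prevent exact agreement of the two rankings. The plan handles this through the continuity and BEP-loss formulation above rather than by claiming the argmax sets coincide.
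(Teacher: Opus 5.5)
Your proof is correct. The paper gives no separate argument for this corollary. It treats the first sentence as immediate from $Z_k=H_k/(\sigma_w^2+J_k)$. It supports the second sentence only numerically, via Fig.~\ref{fig:ablation}(b): at $-15$~dB per interferer, the channel-averaged BEP is $8.12\times10^{-7}$ for the noise-aware rule versus $2.44\times10^{-5}$ for SIR-FAMA.

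Your route makes both halves rigorous. The sandwich $S_k(1-\epsilon)\le Z_k\le S_k$ gives ``approaches'' a precise meaning. Under a resolvable SIR gap the argmaxes coincide exactly. In general, $Z_{k_S}\ge(1-\epsilon)S_{k_S}\ge(1-\epsilon)S_{k^\star}\ge(1-\epsilon)Z_{k^\star}$ for the SIR-selected port $k_S$, so the BEP loss vanishes. Your two-port realization, combined with Theorem~\ref{thm:selection}, turns the weak-interference claim into an analytic counterexample.

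Two small tightenings are worth making. First, your gap condition guarantees only that the top choices agree, not that the full rankings agree. Second, if $\epsilon\to0$ is reached by varying the $J_k$ rather than by sending $\sigma_w^2\to0$ at fixed $(H_k,J_k)$, you need uniform rather than pointwise continuity. Uniform continuity does hold: $\rho$ is Lipschitz in $\ln Z$ with constant $(\alpha-1)/4$, and $|dp_e/d\rho|$ is bounded by \eqref{eq:derivative}. This gives a loss bound proportional to $|\ln(1-\epsilon)|$ for every channel realization.

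Conversely, the paper's numerics show that the SIR penalty survives channel averaging, whereas your single realization establishes it only pointwise. To transfer your counterexample to average BEP, note that the strict inequalities $S_1>S_2$ and $Z_2>Z_1$ persist on an open neighborhood of your realization. That neighborhood has positive probability whenever the joint port-power law has full support, as it does for iid ports.
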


\begin{corollary}[Intrinsic finite-sample floor]
For fixed $N_s<\infty$ and $\alpha=P_{D,1}/P_{D,0}>1$, letting $Z_k\to\infty$ gives $\rho_k\to\alpha$ and hence
\begin{equation}
 p_{e,\infty}=p_e(1,\alpha)>0.
\label{eq:intrinsic_floor}
\end{equation}
Thus no amount of channel gain or port diversity removes the finite-$N_s$, finite-$\alpha$ detector floor. For the numerical default $N_s=120$, $\alpha=10$, \eqref{eq:intrinsic_floor} is $1.97\times10^{-34}$ and is therefore invisible on the plotted scales, but the structural floor is retained explicitly. This agrees with the intrinsic NoiseMod floor highlighted in recent FAS-NoiseMod analysis \cite{zayyani_fas_noisemod}.
\end{corollary}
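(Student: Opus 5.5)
The plan is to establish the floor in two pieces. First, compute the limit of the per-port variance ratio as $Z_k\to\infty$. Second, show that $p_e$ evaluated at that limiting ratio is strictly positive. Monotonicity from Theorem~\ref{thm:selection} then turns this limit into a uniform lower bound.

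Starting from the expression in the proof of Theorem~\ref{thm:selection},
\begin{equation*}
\rho_k=\frac{1+P_{D,1}Z_k}{1+P_{D,0}Z_k}.
\end{equation*}
Dividing numerator and denominator by $Z_k$ gives $\rho_k\to P_{D,1}/P_{D,0}=\alpha$. The same derivative computed in that proof, $(P_{D,1}-P_{D,0})/(1+P_{D,0}Z_k)^2>0$, shows that $\rho_k$ increases monotonically toward $\alpha$. Hence $\rho_k<\alpha$ for every finite $Z_k\ge 0$.

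Next I would use the strict decrease of $p_e(\rho)$ from \eqref{eq:derivative}. This gives $p_e(\rho_k)>p_e(\alpha)$ for every port and every channel/interference realization. Because $p_e$ is continuous (the incomplete Gamma functions are continuous in their second argument), $p_e(\rho_k)\downarrow p_e(\alpha)$. By scale invariance, $p_e(\alpha)=p_e(1,\alpha)$. Selecting the best port only changes which $Z_k$ is used, and the bound holds for all of them. It therefore also survives maximization over ports and averaging over fading. This is the content of ``no port diversity removes the floor.''

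For positivity, I would write
\begin{equation*}
p_e(1,\alpha)=\tfrac12\left[\GammaregU(N_s,z)+\GammaregL(N_s,a)\right],
\end{equation*}
with $a=N_s\ln\alpha/(\alpha-1)$ and $z=\alpha a$. Both arguments are finite and strictly positive when $1<\alpha<\infty$ and $N_s<\infty$. The Gamma$(N_s,1)$ density is strictly positive on $(0,\infty)$, so $\GammaregL(N_s,a)>0$ and $\GammaregU(N_s,z)>0$.

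The numerical value $1.97\times10^{-34}$ I would treat as a direct evaluation of \eqref{eq:exact_bep} at $N_s=120$, $\alpha=10$, rather than something to prove.

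The only subtle point is the meaning of ``letting $Z_k\to\infty$.'' It must be read as approaching the infimum without attaining it, and the strict monotonicity of Theorem~\ref{thm:selection} is exactly what makes the floor a strict lower bound rather than merely a limit.
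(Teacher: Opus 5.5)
Your proposal is correct and follows the paper's route. The paper likewise takes $Z_k\to\infty$ in $\rho_k=(1+P_{D,1}Z_k)/(1+P_{D,0}Z_k)$ to get $\rho_k\to\alpha$, evaluates the scale-invariant BEP at $(1,\alpha)$, and gets positivity because both incomplete-Gamma arguments, $a=N_s\ln\alpha/(\alpha-1)$ and $z=\alpha a$, are finite and positive. Your use of the strict monotonicity from Theorem~\ref{thm:selection} to obtain $p_e(\rho_k)>p_e(1,\alpha)$ for every finite $Z_k$, and hence after port maximization and fading averaging, spells out what the paper leaves implicit in its ``no amount of channel gain or port diversity'' remark.
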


This result is deliberately an oracle statement: it assumes instantaneous knowledge of $H_k$ and $J_k$ at all candidate ports. Section~\ref{sec:acquisition} introduces a finite-sample sensing baseline and Section~\ref{sec:results} quantifies the gap.

\section{Independent-Port Theory With Common Interferer Bits}
\label{sec:iid}
An iid fading-port benchmark is useful analytically, but care is required. Even if $\{H_k,G_{1,k},\ldots,G_{N_I,k}\}$ are independent across $k$, the transmitted bit vector $\bm B$ in \eqref{eq:J} is common to every port. Hence the $Z_k$ are independent \emph{conditional on} $\bm B$, not unconditionally.

For a fixed bit vector $\bm b$, let
\begin{equation}
 F_{Z|\bm b}(z)=\E_{J|\bm b}\left[F_H\!\left(z(\sigma_w^2+J)\right)\right].
\label{eq:Fzcond}
\end{equation}
Conditional iid order statistics then give
\begin{equation}
 F_{Z_{\max}}(z)=2^{-N_I}\sum_{\bm b\in\{0,1\}^{N_I}}\left[F_{Z|\bm b}(z)\right]^{N_p}.
\label{eq:iid_vector}
\end{equation}
For identical interferers, only $M=\sum_iB_i$ matters. Since $M\sim\operatorname{Binomial}(N_I,1/2)$,
\begin{equation}
 F_{Z_{\max}}(z)=\sum_{m=0}^{N_I}\binom{N_I}{m}2^{-N_I}\left[F_{Z|M=m}(z)\right]^{N_p}.
\label{eq:iid_binomial}
\end{equation}
Let $\psi(z)$ denote \eqref{eq:exact_bep} after substituting $V_0\propto1+P_{D,0}z$ and $V_1\propto1+P_{D,1}z$. The exact iid-port average BEP can be written as the Stieltjes integral
\begin{equation}
 \bep=\sum_{m=0}^{N_I}\binom{N_I}{m}2^{-N_I}\int_0^\infty \psi(z)\,d\!\left[F_{Z|M=m}(z)^{N_p}\right].
\label{eq:iid_bep}
\end{equation}
Equations~\eqref{eq:iid_binomial}--\eqref{eq:iid_bep} are exact for iid fading ports under the synchronous random-bit model. In contrast, the tempting expression
\begin{equation}
 \left[\sum_m w_mF_{Z|m}(z)\right]^{N_p}
\end{equation}
mixes over the common bit state \emph{before} taking the port maximum and is generally unequal to $\sum_mw_mF_{Z|m}(z)^{N_p}$.

\section{Correlated Evaluation}
\label{sec:method}
\subsection{Conditional Monte Carlo for Correlated Ports}
For the Jakes-correlated benchmark, a tractable closed-form joint distribution of the ratios $Z_k=H_k/(\sigma_w^2+J_k)$ is not assumed. We instead draw channel/interferer states from \eqref{eq:kmu_joint} and \eqref{eq:J}, choose a port under each selection rule, and evaluate the exact conditional BEP \eqref{eq:exact_bep}. For $L$ channel states, the estimator is
\begin{equation}
 \widehat{\bep}=\frac1L\sum_{\ell=1}^{L}p_e\!\left(V_0^{(\ell)},V_1^{(\ell)}\right).
\label{eq:rbmc}
\end{equation}
This Rao--Blackwellized/conditional Monte Carlo estimator integrates out the $N_s$ received-sample randomness exactly and therefore resolves small BEPs far more efficiently than error counting. Reported uncertainty is the standard error of the terms in \eqref{eq:rbmc}. A separate received-sample simulation validates the detector implementation.

The baseline port rules are: (i) the exact noise-aware oracle \eqref{eq:optimal_port}; (ii) SIR-FAMA, $\arg\max H_k/J_k$; (iii) max-$H$ FAS; (iv) min-$J$; and (v) a fixed port. A random port has the same ensemble-average marginal performance as a fixed port under the symmetric stationary port model and is therefore not plotted separately.

\subsection{Variance-Domain Partial Probing Under Frozen States}
\label{sec:acquisition}
To expose the sensing difficulty behind the oracle, choose $M\le N_p$ uniformly spaced probed ports. We deliberately make the favorable assumption that channels and interferer bit states remain constant throughout all sensing samples and the following data bit. Because the sensing interval can span many nominal $N_s$-sample bit durations, this extends the interferer-state coherence beyond the one-bit synchronous model of Section~\ref{sec:model}; it is therefore a diagnostic lower bound on sensing difficulty, not an implementable fast-FAMA protocol under bit-to-bit-changing interference. At a probed port $k$, collect $L_s$ desired-silent samples and $L_s$ samples with a known Gaussian pilot of variance $P_p$. Their sample-mean energies satisfy
\begin{align}
 A_k&\sim\operatorname{Gamma}\!\left(L_s,\frac{C_k}{L_s}\right),\label{eq:A}\\
 B_k&\sim\operatorname{Gamma}\!\left(L_s,\frac{C_k+H_kP_p}{L_s}\right).
\label{eq:B}
\end{align}
Use
\begin{equation}
 \widehat C_k=A_k,\qquad \widehat H_k=\left[\frac{B_k-A_k}{P_p}\right]^+,
\end{equation}
and select
\begin{equation}
 \widehat k=\arg\max_{k\in\mathcal S_M}\frac{\widehat H_k}{\widehat C_k}.
\label{eq:practical_select}
\end{equation}
This is not a channel-reconstruction algorithm; it is a direct variance-domain estimator of the theorem's metric over a subset.

To include threshold uncertainty, the selected port is optionally calibrated with $L_c$ known low-state samples and $L_c$ known high-state samples. Their unconstrained sample-mean estimates are ordered through the two-variance constrained ML fit. When $\widehat V_1^{\rm raw}\le\widehat V_0^{\rm raw}$, equal calibration sample counts place the constrained optimum on the boundary at the pooled mean $\widehat V_0=\widehat V_1=(\widehat V_0^{\rm raw}+\widehat V_1^{\rm raw})/2$; the threshold uses the continuous ordered-model limit $\eta\to\widehat V_0$. BEP is always evaluated under the true $V_0,V_1$, and the frequency of such boundary events is archived so that the calibration pathology is visible rather than silently clipped. The acquisition cost for one following data bit is
\begin{equation}
 N_{\rm acq}=2ML_s+2L_c,\qquad \xi=\frac{N_s}{N_s+N_{\rm acq}},
\label{eq:overhead}
\end{equation}
where $\xi$ is the data-sample fraction under this deliberately stringent one-bit accounting. If a state could be reused over many bits, the overhead could be amortized; however, changing interferer bits also change $J_k$, so such amortization is not automatically valid for fast FAMA.

\section{Numerical Results}
\label{sec:results}
Unless stated otherwise, $N_s=120$, $\alpha=10$, $\sigma_w^2=1$, $\Omega=1$, $\kappa=1.5$, and $\mu=2$. The desired and each interference waveform use \eqref{eq:levels}.

\subsection{Why the Correct Modeling Choices Matter}
Figure~\ref{fig:ablation}(a) isolates the detector, interference weighting, and port-selection conventions at $N_p=8$, $W=1$, and $N_I=15$. At a desired average variance of $10$ dB, the CLT/harmonic detector predicts $2.82\times10^{-4}$ instead of the exact $2.43\times10^{-4}$, a $16.1\%$ overestimate. Replacing random two-level interference by equal per-user power predicts $1.67\times10^{-4}$, a $31.3\%$ underestimate. SIR and noise-aware selection nearly coincide here because $15$ interferers make the system strongly interference dominated.

Figure~\ref{fig:ablation}(b) shows why that last observation must not be generalized. With $N_I=4$ and weak interference ($-15$ dB per interferer), the exact noise-aware rule achieves $8.12\times10^{-7}$ while SIR-FAMA gives $2.44\times10^{-5}$; max-$H$ FAS is close to optimal because thermal noise dominates. At $0$ and $5$ dB per interferer, SIR approaches the noise-aware rule, consistent with the corollary to Theorem~\ref{thm:selection}. The plot therefore separates the noise-limited and interference-limited roles of FAS and FAMA rather than treating SIR selection as universally exact.

\begin{figure*}[t]
\centering
\includegraphics[width=0.98\textwidth]{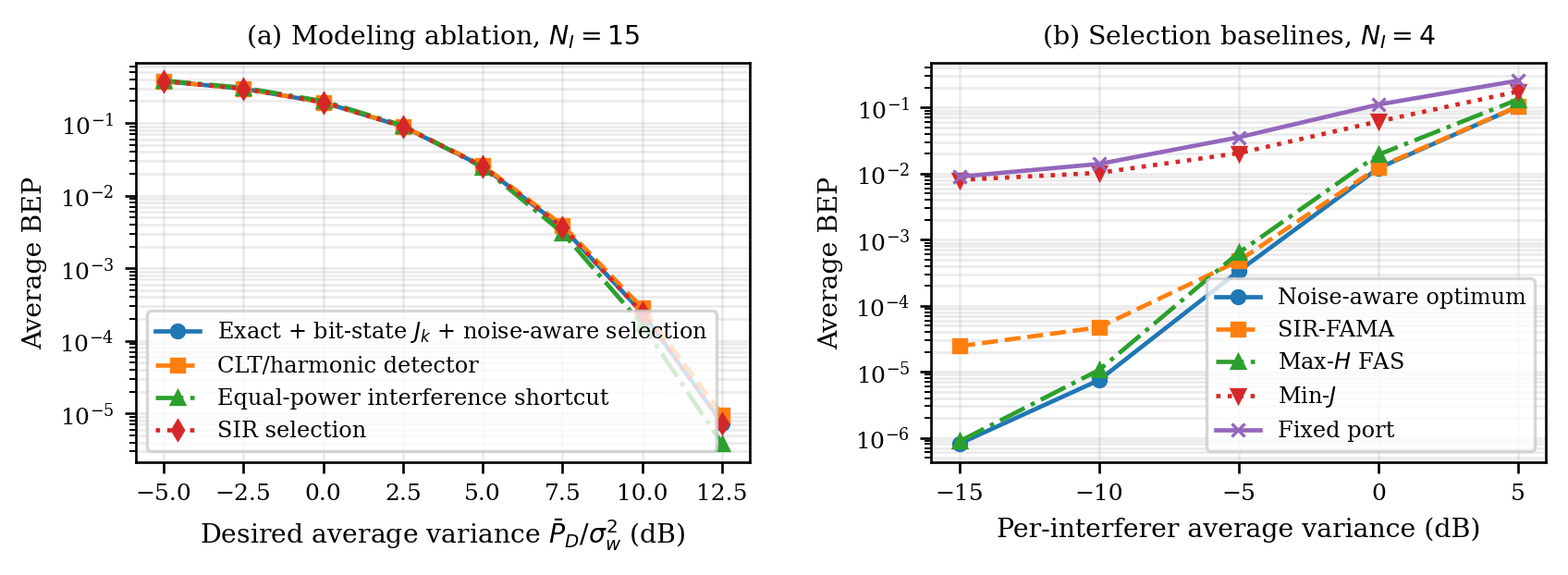}
\caption{(a) Ablation of the exact finite-$N_s$ detector, random-bit interference weighting, and selection convention. (b) Port-selection baselines versus per-interferer average variance, showing the large SIR penalty when thermal noise is non-negligible and convergence toward SIR-FAMA as interference dominates.}
\label{fig:ablation}
\end{figure*}

\subsection{Load Scaling and Admission Capacity}
A multiple-access paper should measure supported co-channel load, not only BEP at one fixed $N_I$. The nominal metric is
\begin{equation}
 N_I^{\max}(\epsilon)=\max\{N_I:\widehat{\bep}(N_I)\le\epsilon\}.
\label{eq:loadmetric}
\end{equation}
Because a discrete admission boundary can be sensitive to Monte Carlo uncertainty, the reported panel uses the more conservative pointwise rule
\begin{equation}
 N_{I,95}^{\max}(\epsilon)=\max\{N_I:\widehat{\bep}(N_I)+1.96\,\mathrm{SE}(N_I)\le\epsilon\}.
\label{eq:loadmetric95}
\end{equation}
The factor $1.96$ is the upper endpoint of the usual pointwise normal-approximation 95\% interval; it is a screening convention, not a simultaneous confidence guarantee over all tested loads. Figure~\ref{fig:load} uses desired average variance $5$ dB, per-interferer average variance $0$ dB, and $\epsilon=10^{-2}$. At $W=2$, increasing $N_p$ shifts the BEP-versus-load curve substantially before the fixed-aperture benefit levels off. Using the conservative rule \eqref{eq:loadmetric95}, the single-port receiver supports only $N_{I,95}^{\max}=2$. At $W=4$, $N_p=16$ supports $N_{I,95}^{\max}=16$, an eightfold increase in admissible co-channel users for the stated target and power normalization. The uncertainty-aware rule matters at some boundaries: for example, at $W=2$, the nominal values for $N_p=2$ and $4$ are six and eleven, while the conservative values are five and ten. At smaller apertures the useful port count levels off earlier: for $W=0.5$, all simulated $N_p\ge2$ cases conservatively support seven interferers. The detailed $W$ dependence need not be monotone at every discrete $(W,N_p)$ point because the Jakes correlation itself oscillates with separation; the robust trend is that additional \emph{spatial degrees of freedom}, not port count alone, determine the load gain.

\begin{figure*}[t]
\centering
\includegraphics[width=0.98\textwidth]{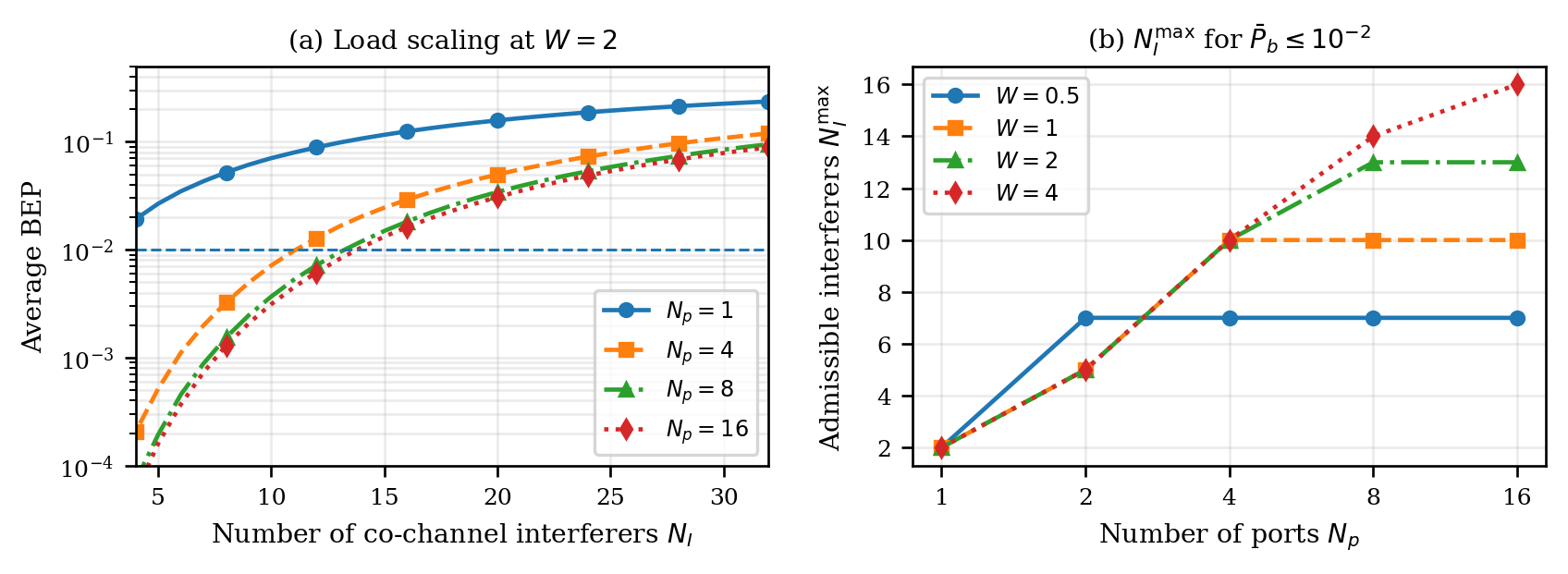}
\caption{Multiple-access load scaling. Desired average variance is $5$ dB, each interferer has average variance $0$ dB, and the target is $10^{-2}$. Panel (b) reports the conservative rule in \eqref{eq:loadmetric95}, not merely thresholding the Monte Carlo point estimate.}
\label{fig:load}
\end{figure*}

\begin{figure*}[t]
\centering
\includegraphics[width=0.98\textwidth]{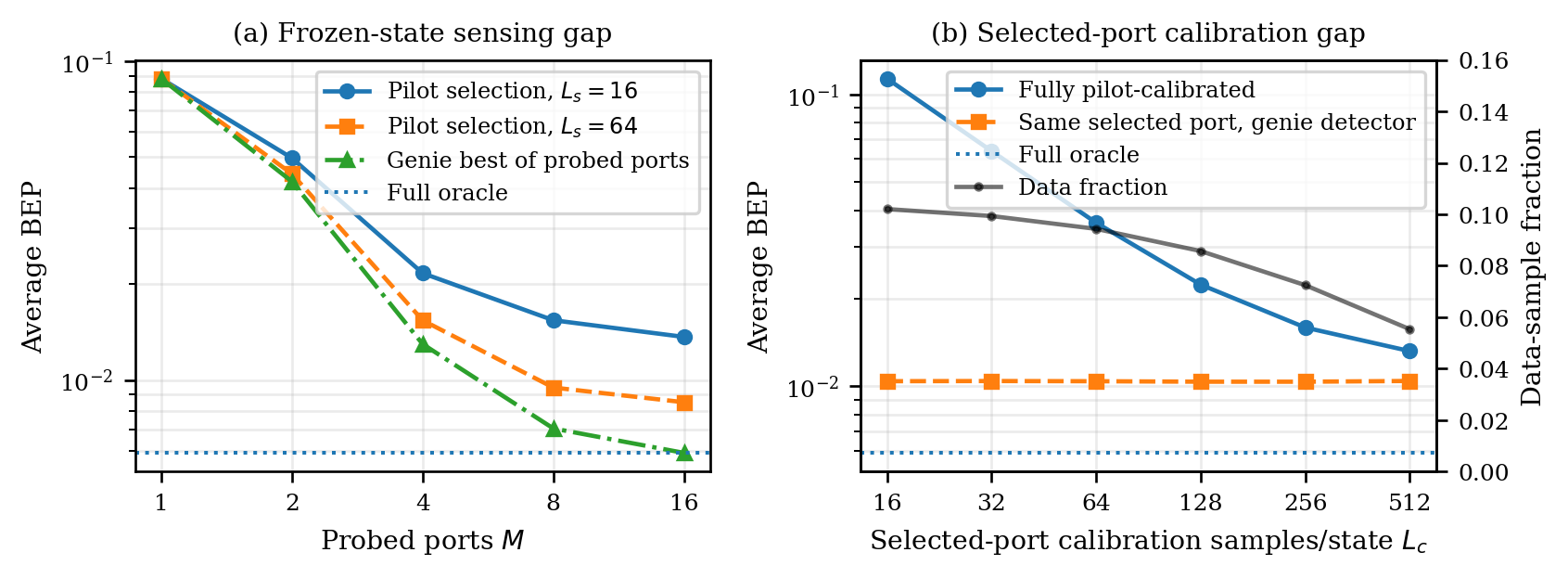}
\caption{Finite-sample variance-domain sensing diagnostic at $N_p=16$, $W=2$, $N_I=12$ under frozen channels and interferer states. (a) Partial probing isolates the port-selection gap by using a genie detector after the selected port. (b) Plug-in threshold calibration exposes additional estimation error and the corresponding one-bit data-sample fraction. The frozen-state assumption is deliberately favorable.}
\label{fig:practical}
\end{figure*}

\subsection{Independent-Port Theory and the Common-Bit Coupling}
Figure~\ref{fig:iid} validates \eqref{eq:iid_bep} for $N_I=8$ and desired average variance $5$ dB. The bit-count-conditioned order-statistic mixture agrees with direct iid multiport simulation to within $1.1\%$ at every plotted $N_p$ and within one Monte Carlo standard error in the archived check. The naive unconditional $F_Z^{N_p}$ rule becomes increasingly optimistic because it effectively redraws the interferer state independently at every port. At $N_p=16$, it predicts $4.71\times10^{-5}$ instead of the correct mixture value $2.75\times10^{-4}$, an optimism factor of $5.83$. Thus, independence of spatial fading does not imply unconditional independence of the complete per-port decision metric when the same random transmitted bits illuminate all ports.

\begin{figure}[t]
\centering
\includegraphics[width=\columnwidth]{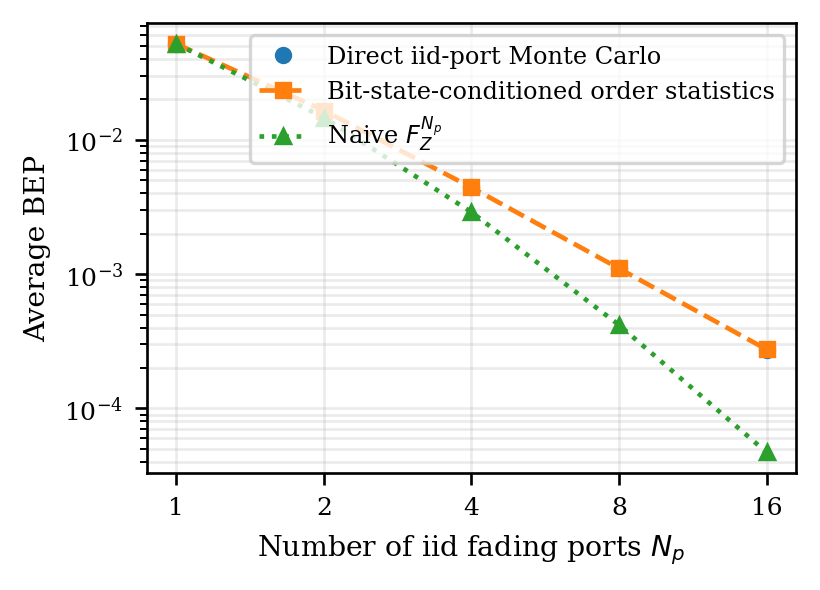}
\caption{Independent fading ports with the same synchronous interferer bits at all ports. Conditioning on the common bit count yields the correct order statistic; mixing first and using an unconditional $F_Z^{N_p}$ is overly optimistic.}
\label{fig:iid}
\end{figure}

\subsection{Near--Far Robustness}
Figure~\ref{fig:nearfar} keeps the total average co-channel variance fixed with $N_I=8$ while increasing the ratio between one dominant interferer and each of the remaining weak interferers. The noise-aware and SIR rules remain near $1.2$--$1.8\times10^{-3}$ across the sweep, whereas max-$H$ FAS degrades from $2.14\times10^{-3}$ to $9.92\times10^{-3}$. The fixed port remains around $5\times10^{-2}$. The result shows that desired-signal diversity alone does not reproduce the interference-avoidance benefit when co-channel power becomes spatially dominated by one user.

\begin{figure}[t]
\centering
\includegraphics[width=\columnwidth]{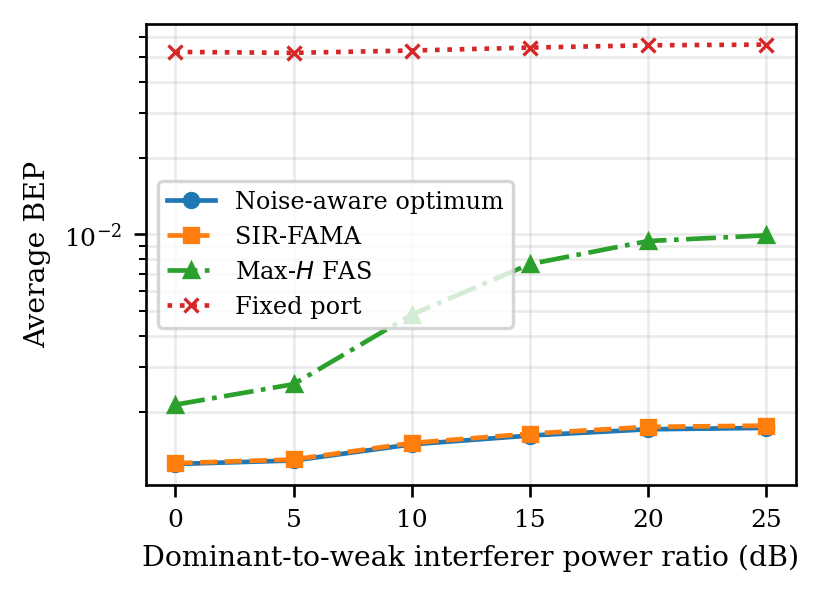}
\caption{Near--far stress test with fixed total average co-channel variance. One interferer is strengthened while the other seven are weakened so the total mean remains constant.}
\label{fig:nearfar}
\end{figure}

\subsection{The Oracle-to-Finite-Sample-Sensing Gap}
Figure~\ref{fig:practical} tests the acquisition bridge of Section~\ref{sec:acquisition} at $N_p=16$, $W=2$, $N_I=12$, desired average variance $5$ dB, and a $20$ dB selection pilot. The full oracle BEP is $5.90\times10^{-3}$. With a genie detector after the selected port, probing more ports improves selection: for $L_s=64$, the BEP falls from $8.82\times10^{-2}$ at one probed port to $8.51\times10^{-3}$ at all 16 ports. However, all-port probing at this sample count consumes $2048$ acquisition samples before detector calibration, leaving only $\xi\approx5.5\%$ data samples under the one-bit accounting of \eqref{eq:overhead}.

The second panel fixes $M=16$, $L_s=32$ and adds threshold calibration. The ordered-fit boundary event $\widehat V_1^{\rm raw}\le\widehat V_0^{\rm raw}$ occurs in $8.64\%$ of trials at $L_c=16$, $0.923\%$ at $L_c=64$, and only $0.0017\%$ at $L_c=512$. With $L_c=512$ samples per desired state, the fully plug-in receiver reaches $1.32\times10^{-2}$, still above the same-selected-port genie-detector value near $1.04\times10^{-2}$ and well above the full oracle. The data fraction is again only about $5.5\%$. This is intentionally a negative sensing result: even under the favorable frozen-state assumption, direct per-port variance acquisition approaches the oracle only by spending a large number of samples. It supports, rather than contradicts, the recent fast-FAMA literature's emphasis on partial observation, reconstruction, and selection-oriented sensing \cite{waqar_fast_fama,elganimi_channel_est}.

\section{Discussion, Scope, and Implications}
\label{sec:discussion}
\subsection{Novelty Boundary}
Exact finite-sample Gamma/ML detection is prior art \cite{alshawaqfeh_tnm}, and SINR-aware FAMA selection already appears in noisy/slow-FAMA work \cite{yang_slow_noisy,eskandari_cgan}. Here, the new analytical point is the exact NoiseMod-specific equivalence between finite-sample BEP minimization and the \emph{current bit-state-aware} statistic in \eqref{eq:optimal_port}, together with the common-state dependence in \eqref{eq:iid_vector}. Related studies already cover FAS diversity for NoiseMod \cite{zayyani_fas_noisemod}, FAS-assisted OODN over $\kappa$-$\mu$ fading \cite{araujo_oodn_fas}, random co-channel OODN interference \cite{araujo_oodn_interference}, and noise-domain NOMA using distinct statistical dimensions \cite{yapici_ndnoma}.

OODN and the present two-level model are not interchangeable: OODN has a nominally off state, whereas both $P_0$ and $P_1$ here are nonzero. Conversely, our synchronous block assumption excludes interferer-state changes inside the $N_s$-sample decision window. With such changes, the conditional samples need not be identically distributed and the optimal likelihood generally becomes a weighted-energy problem rather than \eqref{eq:gamma_stat}.

\subsection{Oracle, Sensing, and Channel Scope}
Theorem~\ref{thm:selection} is an upper benchmark, not a claim of instantaneous knowledge of $H_k$ and $J_k$ at every port. Full-port signal/interference knowledge and selection-oriented reconstruction are recognized FAMA bottlenecks \cite{waqar_fast_fama,elganimi_channel_est}. Our probing experiment is even more favorable than symbol-by-symbol fast FAMA because it freezes interferer states throughout sensing; its poor sample efficiency therefore understates the full acquisition challenge.

The correlated $\kappa$-$\mu$ process in \eqref{eq:kmu_joint} is likewise a benchmark: it uses Jakes-correlated scattered cluster fields and one dominant-component phase over the aperture. User-specific angles, dominant-phase progression, nonisotropic scattering, noninteger $\mu$, or unequal desired/interference covariance models can change the numerical curves. The detector and selection theorem themselves require only realized nonnegative $H_k,J_k$ and the conditional Gaussian variance model.

\subsection{Interpretation of the Load Metric}
Neither \eqref{eq:loadmetric} nor \eqref{eq:loadmetric95} is a universal network capacity. Both depend on the stated powers, BEP target, synchronous activity, ideal switching, and oracle state knowledge; \eqref{eq:loadmetric95} is only a pointwise Monte Carlo screening rule. The metric is nevertheless useful because it converts FAMA gain into an explicit co-channel admission question. A system-level capacity study would additionally require traffic, coding, user geometry, path loss, and net rate after sensing overhead.

\section{Conclusion}
We analyzed FAMA for two-level NoiseMod with random co-channel NoiseMod bits. Exact finite-sample detection implies that conditional BEP decreases strictly with the variance ratio, yielding the BEP-optimal oracle metric $H_k/(\sigma_w^2+J_k)$. Random bit-state scaling gives an exact aggregate-interference transform and moments but generally prevents a single common-scale $\kappa$-$\mu$ collapse. Even with iid fading ports, shared interferer bits create common-state dependence, so conditioning must precede the port order statistic. Under the stated Jakes-correlated benchmark, FAMA increases the admissible co-channel load and remains useful under near--far imbalance. The finite-sample sensing experiment, despite deliberately frozen interference states, still incurs large overhead; practical fast NoiseMod-FAMA therefore remains fundamentally a selection-oriented sensing problem.

\end{document}